\newtheorem{theorem}{Theorem}
\newtheorem{proposition}{Proposition}
\newtheorem{remark}{Remark}
\newtheorem{assumption}{Assumption}
\begin{document}

\title{Control Protocol Design and Analysis for Unmanned Aircraft System Traffic Management}
\author{Jiazhen Zhou, Dawei Sun, Inseok Hwang, Dengfeng Sun
\thanks{Jiazhen Zhou, Dawei Sun, Inseok Hwang and Dengfeng Sun are with the School of Aeronautics and Astronautics, Purdue University, West-Lafayette, IN 47906 USA; Emails: \texttt{\{zhou733, sun289, ihwang, dsun\}@purdue.edu}.}}

\maketitle

\begin{abstract}
Due to the rapid development technologies for small unmanned aircraft systems (sUAS), the supply and demand market for sUAS is expanding globally. With the great number of sUAS ready to fly in civilian airspace, an sUAS aircraft traffic management system that can guarantee the safe and efficient operation of sUAS is still at absence. In this paper, we propose a control protocol design and analysis method for sUAS traffic management (UTM) which can safely manage a large number of sUAS. The benefits of our approach are two folds: at the top level, the effort for monitoring sUAS traffic (authorities) and control/planning for each sUAS (operator/pilot) are both greatly reduced under our framework; and at the low level, the behavior of individual sUAS is guaranteed to follow the restrictions. Mathematical proofs and numerical simulations are presented to demonstrate the proposed method. 
\end{abstract}

\section{Introduction}

With the vast investment of financial support and research effort, sUAS are envisioned to achieve autonomy based on the rapid development of technologies including guidance, communication, sensing, and control. Commercial sUAS have been developed for  a variety of tasks, such as package delivery, rescue operations, photography, surveillance, infrastructure monitoring, and etc. The market for sUAS of civilian purposes is expanding rapidly with the potential users including companies, governments, and hobbyists. With expectation of a great number of sUAS operating in the airspace system, especially in the urban environment, the control of sUAS behavior and management of their traffic are crucial for safety and efficiency~\cite{aweiss2018unmanned}. With the above concerns, some rules and laws to regulate the operation of sUAS in the civil domain have been published by the Federal Aviation Administration (FAA)~\cite{federal2016summary}. However, a traffic management system that can ensure the enforcement of the rules and the efficiency of the system is still at absence. 

The need for sUAS Traffic Management (UTM) System has long been recognized with the increasing number of registered sUAS. The FAA and NASA are leading efforts for making the rules and conducting the research on the large scale sUAS operations. A build-a-little-test-a-little strategy is currently used to address the scalability~\cite{johnson2017flight}. The UTM research is divided into four Technology Capability Levels (TCL)~\cite{NASAUTM}:
\begin{itemize}
	\item  achieve rural UAS operations for agriculture, firefighting and infrastructure monitoring.
	\item realize beyond-visual line-of-sight operations in sparsely populated areas, and provide flight procedures and traffic rules for longer-range applications.
	\item include cooperative and uncooperative UAS tracking capabilities to ensure collective safety of manned and unmanned operations over moderately populated areas.
	\item involve UAS operations in higher-density urban areas for tasks such as news gathering and package delivery, and  large-scale contingency mitigation.
\end{itemize}
The flight tests for TCL 1 and TCL 2 have been successfully conducted in NASA's test cites, and basic requirements for sUAS operation in less populated areas have been proposed based on test results~\cite{johnson2017flight,NASAOStest}. The discussion and test on airspace  design, corridors, geofencing, severe weather avoidance, separation management, spacing, and contingency management are the main focus and most challenging parts in TCL 3 and TCL 4 research. However, no results on TCL 3 and TCL 4 have been reported, to the best of our knowledge. 
On the other hand, a few works have conceptually discussed the architecture of UTM system and identified its basic elements~\cite{jiang2016unmanned,ren2017small,ippolito2019autonomy}. These works envision UTM based on the existing Air Traffic Management (ATM) system for crewed aircraft. Nevertheless, such a design may not be feasible for large scale operation or dense traffic in the sense that it requires features like flight authorizations, flight plan review/approval, external data services (weather, intruder), which may suffer from the curse of dimensionality. 

To improve the scalability of UTM, we note some key characteristics differentiating sUAS operation from the existing crewed aircraft operation. For crewed aircraft, the human pilot is capable of directly controlling the behavior of an aircraft. In contrast, the human operator for sUAS with remote control has to rely on the system's autonomous control and/or decision supporting tools to cope with the large scale operation and complicated operational environment~\cite{weiss2006global}. Although this autonomous nature brings more challenges to sUAS hardware/software requirements and risk evaluation, it brings an opportunity for UTM to administrate the sUAS from a control systems prospective. The fact that the behavior of an sUAS is more governed by the autopilot instead of the human operator reveals that UTM may regulate the collective sUAS traffic behavior by adopting certain control protocols: if the sUAS can agree on predefined control protocols in certain airspace, then collective safety/efficiency assurance for sUAS traffic can be converted to a control protocol design problem. By directly regulating the sUAS traffic behavior in the control level, UTM can reduce the effort for trajectory planning and reviewing significantly.

According to the aforementioned idea, 
we propose a control protocol design and analysis method to improve the scalability for UTM. In this framework, we envision that UTM is responsible for publishing control protocols for sUAS operating in each basic traffic element such that the desirable collective traffic behavior is assured without reviewing the high dimensional trajectories of all sUAS explicitly. The basic element of sUAS traffic network considered here is called a single \textit{link}, which is an abstraction of ``road" or ``lane", proposed by NASA~\cite{NASAwhite}. The main ingredient of our framework is based on the artificial potential field (APF) approach to control the behavior of sUAS in each link, which is motivated by successes of the APF approach in  various aerospace applications such as aircraft guidance law design~\cite{lopez1995autonomous}, conflict resolution~\cite{ruchti2014unmanned}, and multi-agent control~\cite{arcak2007passivity}. Upon the agreement of a set of APF functions in the control protocol, sUAS can achieve desired collective behaviors, such as collision avoidance, boundary clearance, and speed regularization. Our framework, on its core, converts the problem of sUAS traffic control to an APF-based decentralized protocol design problem, similar to flocking control. A commonly accepted definition for flocking behavior is given by Reynolds rules~\cite{reynolds1987flocks}: 1) stay close to nearby flockmates, 2) avoid collision with nearby flockmates, and 3) attempt to match the velocity with nearby flockmates. For safe UTM operation, collision avoidance is crucial, and the velocity of each sUAS should conform to the desired/reference speed associated with the link. A representative design of APF-based flocking control has been introduced by Olfati-Saber~\cite{olfati2006flocking}. Following the idea, variants of distributed flocking algorithms have been proposed, i.e., the flocking algorithm under time varying communication network topology~\cite{tanner2007flocking}, the flocking algorithm that considers complex robotics models with non-holonomic constraints~\cite{lei2008flocking,li2009flocking,gouvea2011potential,jin2015consensus,mastellone2008formation}, and the hybrid flocking algorithm for fixed-wing aircraft~\cite{sun2019hybrid}.

It should be noted that our problem of APF-based control protocol design and analysis for UTM is different from any existing flocking control problems. For flocking algorithm design, even though the collective system of interest has a multi-agent nature, the system consists of a fixed group of sUAS. However, in our problem, the system of interest consists of sUAS in a certain traffic link, which is time varying in the sense that some sUAS may enter the link and some sUAS may leave the link at some time instances. From the traffic management perspective, it is desired to investigate the sufficient conditions for the sUAS to enter the traffic link without causing  collision. The answer to such a problem is related to the analysis of the APF-based control protocol design using the Hamiltonian function (which is commonly used as an analogue to the concept of ``energy"~\cite{olfati2006flocking}. Since it is known that the collision avoidance can be guaranteed by limiting the ``energy", the convergence rate for energy is especially important for estimating the capacity and entry rate for safe operation in a link. Indeed, the most challenging part of our theoretic analysis lies in establishing the convergence rate of the proposed control protocol, which has not been discussed in general flocking control problems.

The contributions of this work are: 1) We develop a control protocol design and analysis method which can safely manage sUAS traffic, while improving the scalability of UTM. The problem of collective behavior regularization and safety assurance is formally defined based on control theory; 2) After a formal definition of the sUAS traffic regularization, we design a distributed control protocol for sUAS in a single traffic link. Based on the convergence property of our control algorithm, we propose conditions on sUAS for safely entering a traffic link; and 3) Based on our control protocol, we propose hardware/software requirements on sUAS operating in the large scale traffic system.


The rest of the paper is organized as follows. Section~\ref{design} identifies the roles and responsibilities of each element in the sUAS traffic system in our framework. Section~\ref{fomulation-and-design} formally introduces the formulation of the sUAS traffic regularization in a single traffic link and offers theoretical results. Section~\ref{simulation-section} demonstrates the results via illustrative numerical simulations. Finally, Section~\ref{conclusion} draws conclusions.

\section{Elements in the sUAS Traffic System}\label{design}
We consider a basic network structure of the future large scale sUAS traffic. A network is a fundamental structure of ground traffic and air traffic, and thus the usage of such structure in sUAS traffic has been envisioned by NASA~\cite{NASAUTM}. The traffic network is defined as a set of nodes and links, and each link connects two nodes with specified locations. Each link commits a specified altitude block and corridor width where sUAS can be flown from one location to the other. For each link, the authority may specify the desired speed, top speed, desired separation, and minimum separation for collision avoidance. One way to ensure all requirements are satisfied is to review every filed flight plan and make sure every restriction is satisfied. The flight plan is often a time-position 4D trajectory, and a certain resolution of the trajectory is required to achieve safety assurance. Such high-dimension, high-resolution trajectory checking can be overburdening for UTM, which should be responsible for managing large scale sUAS operations. Another way to efficiently manage the traffic is to assign sUAS control protocols to each individual traffic link, by which the collective safety of sUAS traffic within each link can be guaranteed and restrictions can be satisfied via theoretical analysis.
\begin{figure*}[t!]
	\centering
	\includegraphics[scale = 0.5]{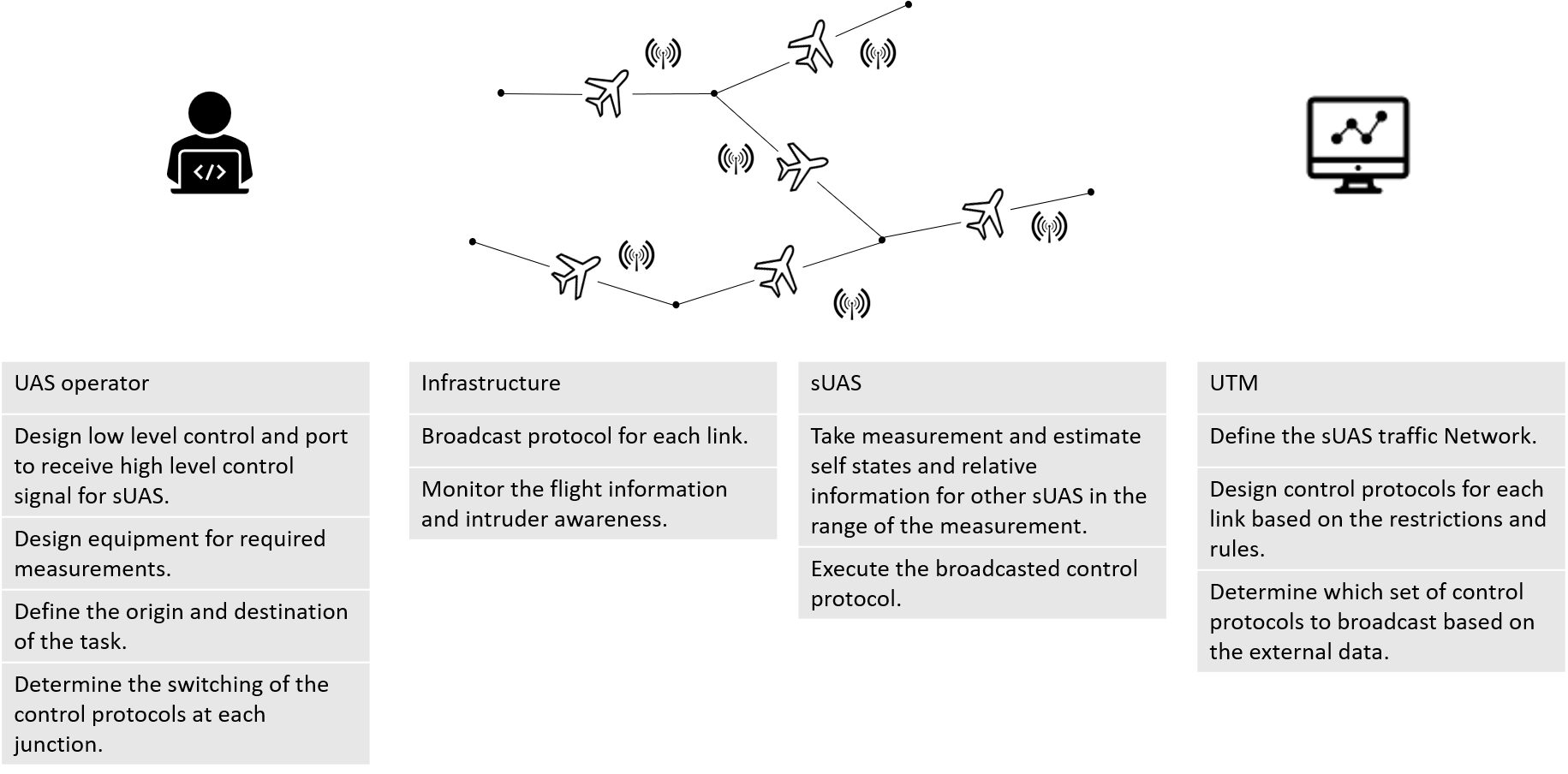}
	\caption{Proposed UTM framework}
	\label{framework}
\end{figure*}

Our framework redefines the roles and responsibilities of four main components in the future sUAS traffic system, which are sUAS operators, infrastructure, sUAS, and UTM. We explain each component in the order of design process:
\begin{enumerate}
	\item UTM design
	\begin{itemize}
	\item Network Definition: A UTM system needs first define the network whose elements include the location of the nodes, and virtual boundary of each link. Each traffic has regulations/rules for sUAS traveling in it, e.g., the speed limit, minimum separation between sUAS, minimum distance to the virtual boundaries, and feasible landing areas. Those regulations may or may not be shared through all links. Separated networks will be necessary for sUAS of different types. For example, fixed-wing sUAS and multi-copters will need different networks due to their distinct flight dynamics, cruising speeds and take-off/landing processes. The construction of the networks needs to be done in collaboration with law makers, such as the FAA.
	\item Control Protocol Design: A set of control protocols can be designed for each link once the rules are defined. In order to have control protocols that are feasible for all sUAS traveling in the network, a basic physical dynamics of sUAS will be assumed. One can assume that the sUAS are equipped with low level autopilot system such that a multi-copter can be viewed as a single integrator model~\cite{kuriki2014consensus}, and a fixed-wing sUAS follows a Dubins car model on the horizontal plane and a double integrator on the vertical direction \cite{sun2019hybrid}. It is desired that under a common control protocol, sUAS can achieve collective safety assurance and operational efficiency. During the operation, the control protocol is broadcast for each link, and sUAS follow the control protocol after entering the link. Based on the control protocol, UTM can propose hardware/software requirements on the on-board measurement/estimation or communication for sUAS.
	\item Control Protocol Selection: Different sets of control protocols need to be designed for each link to take into account different factors, such as weather, human activities, emergencies, etc~\cite{federal2016summary}. During the operations, the best suited control protocols are selected for links and broadcast.
	\end{itemize}
	\item Operator
	\begin{itemize}
		\item Low Level Control Design: The operators design low level controllers, such as from thrust/voltage to accelerations, in order to follow the high level control protocol broadcast by UTM. 
		\item Equipment: Each sUAS should be equipped with necessary sensors and filters to estimate its states and relative information such as the distance to neighboring sUAS or obstacles so that the high level control protocol from UTM can be properly implemented. The sUAS will need to equip with ports that can receive supervisory command from UTM broadcast.
		\item Operation: An operator will need to specify the origin, destination, and the sequence of links the sUAS will travel.
	\end{itemize}
	\item Infrastructure
	\begin{itemize}
		\item Broadcast: It is envisioned that broadcasting will be a good practice for sharing control protocols. By broadcast, control protocols are not hard-coded on each sUAS. The response is immediate if emergency occurs and different control protocols are broadcast.
		\item Monitor: Cameras, LIDAR, and/or radars can also be equipped in the infrastructure to monitor the sUAS traffic and alert for any intruder or malfunction.
	\end{itemize}
	\item sUAS
	\begin{itemize}
		\item State Estimation: During a mission, an sUAS needs to take measurements and estimate the states of itself and relative information with respect to other sUAS or obstacles.
		\item Control Execution: An sUAS receives and executes the control protocol from broadcast.
		\item Control Protocol Switching: When arriving at a junction, multiple control protocols will be available. sUAS select the one specified by the operator before the mission and complete the transition from one link to another.
	\end{itemize}
\end{enumerate}
The overall framework of our design is summarized in Figure~\ref{framework}. It can be seen that in our framework, the operators will only need to determine the sequence of transitions between links. UTM will only need to monitor the real time traffic situations during the daily operation. Collision free, speed limits and other requirements are fulfilled by the design of control protocols. The shift of control design from operators to UTM allows UTM to have a high authority over the behaviors of the sUAS traveling in the network, and thus safety and efficiency can be guaranteed by the collective behavior of sUAS in each link resulting from the common control strategy. 

Each element of the future sUAS traffic system discussed above can be extensively studied. In this article, we limit our attention to one of the most important and challenging elements: the control protocol design and analysis. We develop models and theoretical frameworks for analyzing the sUAS traffic behavior in a single link from a control systems perspective. Our results can be applied to more common and complicated traffic network elements such as merge links and split links~\cite{van2007modeling}. We present our results in details in the following section.

\section{Problem Formulation and Controller Design}\label{fomulation-and-design}
In this section, we formulate the problem of sUAS traffic regularization and present details about how to design and analyze control protocols to regulate the sUAS behavior in a link using artificial potential functions. The objective of this section is to offer a guideline about how to design the APF for each link in an sUAS traffic network and what condition sUAS should satisfy at entry of the link under a certain communication protocol structure such that the speed of all sUAS in the link is regulated and there is no collision or boundary violations.
\subsection{Dynamic model for sUAS}
In this section, we formally introduce the problem of regularization of sUAS traffic in a single link model. Let $\mathcal I$ be the index set of all the sUAS in the traffic system.
We consider a fixed-wing sUAS whose kinematics is described as:
\begin{equation}\label{dynamics}
\begin{aligned}	
\dot x_i &= v_i\cos\theta_i\\
\dot y_i &= v_i\sin\theta_i\\
\dot z_i &= w_i\\
\dot v_i &= a_i\\
\dot \theta_i &= \phi_i\\
\dot w_i &= \delta_i,
\end{aligned}
\end{equation}
where $x_i$ and $y_i$ are the horizontal coordinates, $z_i$ is the vertical coordinate, $\theta_i$ is the horizontal heading angle, $v_i$ is the horizontal velocity, $w_i$ is the vertical speed, and $a_i$, $\phi_i$ and $\delta_i$ are control inputs. By feedback linearizion, the horizontal dynamics can be converted to a double integrator model. Define $v_{xi} \triangleq v_i\cos\theta_i$, and $v_{yi} \triangleq v_i\sin\theta_i$. Then, we have:
\begin{equation}
\begin{aligned}
\begin{bmatrix}
\dot v_{xi}\\ \dot v_{yi}
\end{bmatrix} 
= \begin{bmatrix}
\cos\theta_i &- v_i\sin\theta_i\\
\sin\theta_i & v_i\cos\theta_i 
\end{bmatrix}
\begin{bmatrix}
 a_i\\ \phi_i
\end{bmatrix}. 
\end{aligned}
\end{equation}
Let $[u_{xi},\,u_{yi}] = [\dot v_{xi},\,\dot v_{yi}]$ be our new control input. We then have the following relation between the new and original control inputs: 
\begin{equation}
\begin{bmatrix}
a_i\\ \phi_i
\end{bmatrix} = \begin{bmatrix}
\cos\theta_i & \sin\theta_i\\
\frac{-\sin\theta_i}{v_i} & \frac{\cos\theta_i}{v_i}
\end{bmatrix}\begin{bmatrix}
u_{xi}\\u_{yi}.
\end{bmatrix} 
\end{equation}
The transformation is not defined for $v_i= 0$, which will not be the case for the fixed-wing sUAS. Now we have a double integrator dynamics for the fixed-wing sUAS:
\begin{equation}\label{double_int_system}
\begin{aligned}
\ddot x_i &= u_{xi}\\
\ddot y_i &= u_{yi}\\
\ddot z_i &= \delta_i.
\end{aligned}
\end{equation}
Finally we let $q_i \triangleq [x_i, y_i, z_i]^T$ be the state vector, and let the stack vector $q \triangleq \text{col}(q_1,\,q_2,\,\dots)$. The above model simplifies our process for designing control protocols. This also admits that our following approaches can be easily adopted to control protocol design for multi-copters whose dynamics  can be approximated by a double integrator.

\subsection{Problem Formulation}
Here we introduce the problem formulation for traffic regulation for a single link. A single link is defined as a tuple $L\triangleq(\Omega,\hat v, \hat d, d_{\min}, d_{b,\min}, \hat d_b)$, where $\Omega$ is the physical space the link takes, $\hat v$ is the desired velocity for all sUAS in $\Omega$, $\hat d$ is the desired separation between the sUAS, and $d_{\min}$ is the minimum separation allowed between sUAS. $\hat d_b$ is the desired distance to the boundary of the link, and $d_{b,\min}$ is the minimum separation to the boundary of the link. Note that $\hat v$ is a velocity vector. We assume that $\Omega$ is a convex polyhedron, i,e., $\Omega = \{x|Ax\leq b, \, \partial A x \leq \partial b \}$, where $A\in \mathbb{R}^{m \times 3}$ and $b\in \mathbb{R}^m$ representing the wall/boundary; $\partial A\in \mathbb{R}^{{m'}\times 3}$ and $\partial b\in \mathbb{R}^{m'}$ denote the entrance/exit. When the link is described by a rectangular tube, then $m = 4$, and $m'=2$. Let $A_n$ and $b_n$ be the $n_{th}$ row of $A$ and $b$, respectively. Denote the distance from the $i_{th}$ sUAS to the plane $A_nx =  b_n$ as $d_{in}$. Then, $d_{in}$ can be given as:
\begin{equation}
d_{in} = \frac{A_nq_i-b_n}{A_nA_n^T}.
\end{equation}
We assume that the set
\begin{equation}
\mathcal A \triangleq\bigcap_{n = 1}^m\{x|\frac{A_nx-b_n}{A_nA_n^T}\leq \hat d_b\}
\end{equation}
 is not empty.
It is clear that the reference velocity should be parallel to each plane, i.e., $\forall n,\,\hat{v}^TA_n=0,$.
We define the set:
\begin{equation}
\mathcal I_{\Omega}(t) \triangleq \{i\in\mathcal I|\exists \tau\in[t_0, t], q_i(\tau)\in \Omega\}
\end{equation}
as the set of all the sUAS which entered the link up to time $t$. The objective of sUAS traffic control can be divided into two cases. First, if there is no sUAS entering $\Omega$ after $t_0$, i.e., $\forall t>t_0,\;\mathcal I_{\Omega}(t) = \mathcal I_{\Omega}(t_0)$. We have the following objectives.
\begin{equation*}
\begin{aligned}
O_1:&\forall i\in \mathcal I_{\Omega}(t_0),\;\dot q_i(t)\to \hat v\;\text{as }t\to\infty,\\
O_2:&\forall i,\,j\in \mathcal I_{\Omega}(t_0),\;||q_i(t)-q_j(t)||\geq \hat d\;\text{as }t\to\infty,\\
O_3:&\forall i\in \mathcal I_{\Omega}(t_0),\;\forall n= 1\dots m,\; d_{in}\geq \hat d_b \;\text{as }t\to\infty,
\end{aligned}
\end{equation*}
subject to the following constrains:
\begin{equation*}
\begin{aligned}
C_1:&\forall i\in \mathcal I_{\Omega}(t_0),\;||\dot q_i(t)-\hat v||\in[\underline v,\overline v]\;\forall t\geq t_0,\\
C_2:&\forall i,\,j\in \mathcal I_{\Omega}(t_0),\;||q_i-q_j||\geq d_{\min}\;\forall t\geq t_0,\\
C_3:&\forall i\in \mathcal I_{\Omega}(t_0),\;\forall n= 1\dots m,\;d_{in}\geq d_{b,\min} \;\forall t\geq t_0.
\end{aligned}
\end{equation*}
$O_1$ requires the velocities of all the sUAS in the link converge to the desired velocity. $O_2$ requires that the separations of all the sUAS are greater than the desired separation given sufficiently long time. $O_3$ requires that the positions of all the sUAS in the link converge to the desired separation from the boundary.  $C_1$ requires the boundedness of the velocities of all the sUAS: the upper bound is given from traffic authority, and the lower bound is required for the flyable trajectory for fixed-wing sUAS. $C_2$ requires that the separations between sUAS must be greater than or equal to the minimum separation to ensure collision free. $C_3$ requires that all the sUAS in the link must stay away from the boundary greater than or equal to minimum distance. Second, if there are sUAS entering $\Omega$ at time $t_1$, let $\mathcal I_\Omega(t_1^-)$ denote the set of the sUAS already in the link up to time $t_1$, i.e.,
\begin{equation}
    \mathcal I_{\Omega}(t^-) \triangleq \{i\in\mathcal I|\exists \tau\in[t_0, t), q_i(\tau)\in \Omega\}.
\end{equation}
Then we have $\mathcal I_\Omega(t_1^-)\subsetneq \mathcal I_\Omega(t_1)$. In this case, only $C_1$, $C_2$, and $C_3$ need to be guaranteed under some conditions on the entry states which will be discussed. It should be remarked that sUAS $i$ might be in the link initially and leaves the link at $t^* > t_0$, but by our definition of $\mathcal I_{\Omega}(t)$, $i \in  \mathcal I_{\Omega}(t)$ for any $t > t_0$. It is equivalent to assume the link is infinitely long such that whenever an sUAS enters it, it stays in it. This assumption facilitates the problem formulation and analysis without loss of generality, and it can be relaxed based on the approach in this work.

\subsection{Control Protocol Design and Analysis}
According to the control objectives and constraints, an APF-based control protocol will be introduced in this subsection, which ensures $O_1$-$O_3$ and $C_1$-$C_3$ are satisfied assuming no sUAS enters the link $\Omega$, i.e., 
\begin{equation}
    \mathcal I_\Omega(t) = \mathcal I_\Omega(t_0), \ \ \forall t > t_0.
\end{equation}
The convergence rate of the control protocol will be discussed, based on which the entry condition will be established in the next subsection.

For smooth artificial potential field design, the $\sigma$-norm function $||\cdot||_\sigma: \mathbb{R}^n\to \mathbb{R}_+$ is commonly considered~\cite{olfati2006flocking}:
\begin{equation}
||z||_\sigma = \frac{1}{\varepsilon}(\sqrt{1+\varepsilon||z||^2}-1),
\end{equation}
where $\varepsilon>0$ is a parameter, and $||\cdot||$ is the Euclidean norm. $\sigma$-norm is an approximation for the Euclidean norm but equipped with the differentiability at $z = 0$. The gradient of $\sigma$-norm is given as:
\begin{equation}
\nabla ||z||_\sigma = \frac{z}{\sqrt{1+\varepsilon||z||^2}}.
\end{equation}
For a nonzero vector $d$, denote $d_\sigma = ||d||_\sigma$, and it can be shown that $||d||>d_\sigma$.
We let the monotone function $\psi:\mathbb R^+\to \mathbb R^+$ be the repulsive potential function for collision avoidance, and $\phi:\mathbb R^+\to \mathbb R$ be the gradient of $\psi$. $\psi$ and $\phi$ satisfy:
\begin{equation}\label{psi_prop}
\begin{aligned}
\psi(d) = 0 &\iff d\geq \hat d\\
\phi(d) = 0 &\iff d\geq \hat d\\
\psi(d) >0 &\iff 0\leq d< \hat d.
\end{aligned}
\end{equation}
Let $V_p$ be the accumulated collision potential energy:
\begin{equation}
V_p = \frac{1}{2}\sum_{i}\sum_{j\neq i}\psi(||q_{ij}||_\sigma),
\end{equation}
where
\begin{equation}
    q_{ij} = q_i - q_j,
\end{equation}
then $V_p$ achieves the global minimum of $0$ at $\forall i,j, \in \mathcal{I}_{\Omega}(t_0)$ with $i \neq j$, $||q_{ij}||_\sigma \geq \hat d$, thus, $||q_{ij}||>\hat d$. Similarly, we let $\psi_b$ be the potential function for boundary separation and $\phi_b$ be the gradient of $\psi_b$ that (\ref{psi_prop}) is satisfied with $\hat d$ replaced by $\hat d_b$. Let $V_b$ be the accumulated boundary potential energy:
\begin{equation}
V_b = \sum_i\sum_n \psi_b(d_{in}).
\end{equation}
$V_b$ achieves the global minimum of $0$ when $\forall i \in \mathcal{I}_{\Omega}(t_0)$, $\forall k \in  \left\{1,...,m \right\}$, $d_{in} \geq \hat d_b$
We define the accumulated kinetic energy as:
\begin{equation}
V_k = \frac{1}{2}\sum_{i}(\dot q_i - \hat v)^T(\dot q_i - \hat v).
\end{equation}
It is clear that $V_k$ achieves the global minimum of $0$ if $\forall i \in \mathcal{I}_{\Omega}(t_0),$, $\dot q_i = \hat v$.
Then, the feedback control protocol is designed as:
\begin{equation}\label{control_law}
\begin{aligned}
u_i& = -\sum_{j\neq i}\phi(||q_{ij}||_\sigma) \hat{q}_{ij}-\sum_{n=1}^m\phi_b(d_{in})A_n^T- K_i(v_i-\hat{v}),\\
\end{aligned}
\end{equation}
where:
\begin{equation}
\begin{aligned}
\hat{q}_{ij} &= \frac{q_{ij}}{\sqrt{1+\varepsilon||q_{ij}||^2}-1},
\end{aligned}
\end{equation}
where $K_i$ is the damping ratio, which is a tuning parameter whose design needs to address the physical capability of the sUAS. We can define the positive semi-definite Hamiltonian function:
\begin{equation}
\begin{aligned}
H &= V_p+V_k+V_b.
\end{aligned}
\end{equation}
The following theorem shows $O_1$, $O_2$ and $O_3$ are achieved given that constraints $C_1$, $C_2$ and $C_3$ are satisfied all the time under the case where no sUAS enters the link after $t_0$.
\begin{theorem}\label{main}
Consider a fixed group of sUAS (\ref{dynamics}) in the link $\Omega$, with the initial configuration in $\Omega_c =\{(q(t_0),\dot q(t_0))|H(t_0) = c\}$ applied with control protocol (\ref{control_law}) for $t>t_0$. The following hold:\\
(i) $\dot H(t)\leq 0,\,\forall \,t>t_0$.\\
(ii) Almost every solution of the multi-sUAS dynamics governed by (\ref{dynamics}) converges to an equilibrium where $\forall i,j \in \mathcal{I}_{\Omega}(t_0)$ with $i \neq j,$, $\dot q_i = \hat v,\,d_{in}\geq \hat d_b,\,||q_{ij}||\geq \hat d$ as $t\to\infty$. \\
(iii) If $c \leq c_1^* \triangleq \psi(||d_{\min}||_\sigma)$, no pair of sUAS violate the minimum separation $\forall t>t_0$. \\
(iv) If $c\leq c_2^* \triangleq \frac{1}{2}\tilde v^2$, where $\tilde v = \min\{\overline v-||\hat v||,\,||\hat v||-\underline v\}$, then $\forall i \in \mathcal{I}_{\Omega}(t_0),\,t>t_0,\,||\dot q_i||\in[\underline v, \overline v].$\\
(v) If $c\leq c_3^* \triangleq \phi_b(d_{b,\min})$, then $\forall i \in \mathcal{I}_{\Omega}(t_0), t>t_0,\,d_{in}>d_{b,\min}$.
\end{theorem}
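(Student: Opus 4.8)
The plan is to treat the Hamiltonian $H = V_p + V_k + V_b$ as a Lyapunov function: part (i) shows it is non-increasing, part (ii) identifies the limit set via LaSalle, and parts (iii)--(v) are read off directly from the resulting sublevel-set bound $H(t) \le H(t_0) = c$. I would prove the five items essentially in the stated order, since (iii)--(v) depend only on the monotonicity of $H$ furnished by (i).

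For part (i), I would differentiate $H$ along the closed-loop trajectories. Writing $\dot q_i$ for the velocity and using that $\hat v$ is constant, $\dot V_k = \sum_i (\dot q_i - \hat v)^T u_i$; substituting the control law (\ref{control_law}) splits this into a pairwise interaction term, a boundary term, and the damping term $-\sum_i K_i \|\dot q_i - \hat v\|^2$. Applying the chain rule to the $\sigma$-norm expresses $\dot V_p$ through $\nabla\|q_{ij}\|_\sigma$, which is exactly the direction the control law uses, so the pairwise interaction term cancels $\dot V_p$; here the antisymmetry $q_{ij} = -q_{ji}$ together with the symmetry of $\phi$ also annihilates the $\hat v$ contribution, since $\sum_i\sum_{j\ne i}\phi(\|q_{ij}\|_\sigma)\nabla\|q_{ij}\|_\sigma = 0$. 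Likewise the boundary term cancels $\dot V_b$ after writing $\dot d_{in}$ in terms of $A_n\dot q_i$ and invoking the orthogonality $A_n\hat v = 0$. What remains is $\dot H = -\sum_i K_i\|\dot q_i - \hat v\|^2 \le 0$ whenever each gain $K_i$ is positive (semi-)definite, which gives (i).

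For part (ii), since $H$ is bounded below and non-increasing, LaSalle's invariance principle applies: trajectories converge to the largest invariant set contained in $\{\dot H = 0\}$, which by (i) forces $\dot q_i = \hat v$ for all $i$. On this set the velocities are constant, so $\ddot q_i = 0$ and the control input vanishes, meaning the configuration sits at a critical point of the total potential $V_p + V_b$. To upgrade ``critical point'' to the stated equilibrium with $\|q_{ij}\| \ge \hat d$ and $d_{in} \ge \hat d_b$, I would exploit the gradient-like structure of the reduced dynamics: non-minimizing critical points (saddles and maxima) are unstable and their stable manifolds have measure zero, so almost every initial condition converges to a local minimum, following the center-manifold argument of Olfati-Saber~\cite{olfati2006flocking}. \emph{This is the step I expect to be the main obstacle}: one must argue that the stable (local-minimum) equilibria are precisely those on which the repulsive and boundary potentials are inactive, i.e.\ where $\phi = \phi_b = 0$, so that $V_p = V_b = 0$ and hence $\|q_{ij}\| \ge \hat d$ and $d_{in} \ge \hat d_b$ hold; ruling out spurious local minima with active, force-balanced configurations is delicate and may require an additional structural assumption on $\psi$, $\psi_b$, and the link geometry.

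For parts (iii)--(v), I would use only the energy bound. Because $V_p, V_k, V_b \ge 0$, each is bounded by $H(t) \le c$. For (iii), if some pair violated $\|q_{ij}\| < d_{\min}$, then by monotonicity of the $\sigma$-norm and of $\psi$ one gets $\psi(\|q_{ij}\|_\sigma) > \psi(\|d_{\min}\|_\sigma) = c_1^*$, and since that single symmetric pair already contributes $\psi(\|q_{ij}\|_\sigma)$ to $V_p$, this yields $V_p > c_1^* \ge c$, contradicting $V_p \le c$. For (iv), $V_k \le c_2^* = \tfrac12 \tilde v^2$ forces $\tfrac12\|\dot q_i - \hat v\|^2 \le \tfrac12\tilde v^2$ for each $i$, hence $\|\dot q_i - \hat v\| \le \tilde v$, and the triangle inequality with $\tilde v = \min\{\overline v - \|\hat v\|,\, \|\hat v\| - \underline v\}$ sandwiches $\|\dot q_i\|$ into $[\underline v, \overline v]$. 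Part (v) mirrors (iii): a boundary violation $d_{in} < d_{b,\min}$ would make $V_b \ge \psi_b(d_{in}) > \psi_b(d_{b,\min})$ exceed $c_3^*$ (reading the threshold as $\psi_b(d_{b,\min})$), again contradicting $V_b \le c$.
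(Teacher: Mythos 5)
Your proposal follows essentially the same route as the paper's proof: $H$ is used as a Lyapunov function with $\dot H = -\sum_i K_i\|\dot q_i - \hat v\|^2 \le 0$, LaSalle's invariance principle gives (ii), and (iii)--(v) follow from the sublevel-set bound $H(t)\le H(t_0) = c$ by contradiction. You are in fact more careful than the paper in two places: the measure-zero stable-manifold argument for the ``almost every'' claim in (ii), which the paper merely asserts, and your reading of $c_3^*$ as $\psi_b(d_{b,\min})$ rather than the paper's (apparently typographical) $\phi_b(d_{b,\min})$.
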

\begin{proof}
(i) Denote $\delta v_i \triangleq \dot q_i - \hat v$. We then have:
\begin{equation}\label{H_dot}
\begin{aligned}
\dot H(q,\dot q) &= \sum_{i}\delta v_i^Tu_i+\frac{d}{dt}\Big[\frac{1}{2}\sum_{j\neq i}\psi(||q_{ij}||_\sigma)+\sum_{n=1}^{m} \psi_b(d_{in})\Big]\\
\end{aligned}
\end{equation}
By definition, we have:
\begin{equation}
\begin{aligned}
\frac{d}{dt}\psi(||q_{ij}||_\sigma) &= \phi(|q_{ij}|)\hat{q}_{ij}{}^T(\dot q_i-\dot q_j)\\
\frac{d}{dt}\psi_b(d_{in}) &= \phi_b(d_{in})A_n{}^T\dot q_i
\end{aligned}
\end{equation}
Substituting control protocol (\ref{control_law}) into (\ref{H_dot}) yields:
\begin{equation}
\dot H = \sum_{i}-K_i\delta v_i^T\delta v_i.
\end{equation}
(ii) Statement (i) implies that $\dot H \leq 0$ given that $H$ is positive semidefinite. From LaSalle's invariant principle, all the solutions converge to the largest invariant set contained in $\mathcal I  = \{(q,\dot q)|\dot H = 0\}$. Starting from almost every initial configuration, the positions of all sUAS eventually satisfy the desired separation and their velocities match the reference velocity. \\
(iii) By contradiction, suppose there exists $t_1>t_0$, sUAS $i^*$ and $j^*$ collide, i.e., $||q_{i^*j^*}(t_1)||_\sigma\leq ||d_{\min}(t_1)||_\sigma$, and then
\begin{equation}
\begin{aligned}
H(t_1) &\geq \psi(||q_{i^*j^*}||_\sigma) + \frac{1}{2}\sum_{i\neq i^*,j^*}\sum_{j\neq i,i^*,j^*} \psi(||q_{i^*j^*}||_\sigma)\\
&\geq c_1^*>c.
\end{aligned}
\end{equation}
By Statement (ii), $\forall t>t_0,\;\dot H(t)\leq 0$, therefore $H(t_1)\leq H(t_0)=c<c_1^*$. Absurd.\\
The proof for Statement (iv) and (v) follows the same argument as the proof for Statement (iii), therefore omitted.
\end{proof}
By letting the level set for initial configuration as $\{(q(t_0),\dot q(t_0)|H(t_0)\leq c^*\}$, where $c^* = \min\{c^*_1,c^*_2,c^*_3\}$. One can achieve $O_1,\,O_2,\,O_3$ for almost every initial configuration while $C_1, \,C_2, \,C_3$ are satisfied.

It can be seen that under the adoption of LaSalle's invariant principle, the asymptotic convergence is given but without a convergence rate. The convergence rate of the Hamiltonian can be crucial for designing the entry condition for a link or estimating the behaviors of sUAS in a link for any given time instance. Now we study the convergence property for a fixed group of multi-sUAS system (\ref{double_int_system}) with control protocol (\ref{control_law}). We start with the augmented system dynamics:
\begin{equation}\label{augmented_hamiltonian}
	\ddot q + K\dot q + \nabla \Psi(q) = 0,
\end{equation}
where $K$ is the collective damping ratio/time constant, and $\Psi$ is the collective potential function. It can be seen from (\ref{control_law}) that $K$ is a diagonal matrix with positive diagonal entries. Without loss of generality, we assume the diagonal elements of $K$ are the same, such that $K$ can be reduced to scalar. We will refer $K$ as a scalar in the proceeding without further notification. Such system is often referred as the gradient Hamiltonian system. The asymptotic convergence of (\ref{augmented_hamiltonian}) of different types has been extensively discussed; however, the convergence rate is rarely given~\cite{tanaka1991homoclinic,kozlov2003weak,broyden1969new,helmke2012optimization,alvarez2000minimizing}. It is noted that system (\ref{augmented_hamiltonian}) can also be viewed as a second order differential equation method for solving the following optimization problem:
 \begin{mini}|l|
 	{q}{\Psi(q)}{}{}.
 \end{mini}
The convergence rate is still rarely discussed under the continuous time optimization framework or flocking control framework. 
Motivated by the lack of convergence rate analysis, we develop following the convergence result for the general Hamiltonian system. We start with following assumptions:
\begin{assumption}\label{L-smooth}
$\nabla \Psi$ is differentiable and Lipschitz with constant $L$.
\end{assumption}
\begin{assumption}\label{KL-ineq}
$K>\sqrt{\frac{1}{4L}}$.
\end{assumption}
Our result for convergence rate is given in the following theorem:
\begin{theorem}\label{convergence-rate}
For system (\ref{augmented_hamiltonian}), if Assumptions \ref{L-smooth}, \ref{KL-ineq} hold and $\Psi(x(0)), \nabla \Psi(x(0)), \dot x(0)$ are finite, then:
\begin{equation}
\begin{aligned}
\inf_{\tau \in (0,t)}& p\nabla \Psi(q(\tau))^T\nabla\Psi(q(\tau))+r\dot q(\tau)^Tq(\tau)\leq\\ &\frac{1}{t}\Big(\frac{1}{2}||\alpha\nabla\Psi(q(0))+\dot x(0)||^2_2+(\alpha+K)\Psi(q(0))\Big)
\end{aligned}
\label{eq:th2eq}
\end{equation}
where $p = \frac{2KL-K^2}{4L^2}$, $r = \frac{K}{4}$, and $\alpha = \frac{K}{2L}$.
\end{theorem}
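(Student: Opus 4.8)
The plan is to exhibit a single Lyapunov (energy) function whose value at $t=0$ equals the bracket on the right-hand side, whose time derivative dominates the integrand $p\,\nabla\Psi^\top\nabla\Psi+r\,\dot q^\top\dot q$ on the left (I read the second term of the stated bound as $r\,\dot q^\top\dot q$, consistent with $r=K/4>0$), and which is bounded below; the $1/t$ decay of the infimum then follows from the elementary fact that an infimum never exceeds a time average. Concretely, I would set
\[
\mathcal E(t)=\tfrac12\|\alpha\nabla\Psi(q(t))+\dot q(t)\|^2+(\alpha+K)\Psi(q(t)),\qquad \alpha=\tfrac{K}{2L},
\]
identifying $x$ with $q$, so that $\mathcal E(0)$ reproduces the right-hand side verbatim. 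Since $\Psi=V_p+V_b$ is the collective potential it is nonnegative, and $\alpha+K>0$, so $\mathcal E(t)\ge 0$ for all $t$.

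First I would differentiate $\mathcal E$ along solutions of $\ddot q=-K\dot q-\nabla\Psi(q)$, using the chain rule $\tfrac{d}{dt}\nabla\Psi(q)=\nabla^2\Psi(q)\,\dot q$. After substituting the dynamics, the ``clean'' dissipative terms $-\alpha\|\nabla\Psi\|^2-K\|\dot q\|^2$ appear, together with three terms that obstruct an immediate sign: the Hessian-weighted term $\alpha\,\dot q^\top\nabla^2\Psi\,\dot q$, the mixed Hessian term $\alpha^2\,\nabla\Psi^\top\nabla^2\Psi\,\dot q$, and a velocity--gradient cross term proportional to $\nabla\Psi^\top\dot q$.

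Next I would control these three terms using the two assumptions. Assumption~\ref{L-smooth} gives $-L I\preceq\nabla^2\Psi\preceq L I$, hence $\alpha\,\dot q^\top\nabla^2\Psi\,\dot q\le \alpha L\|\dot q\|^2=\tfrac{K}{2}\|\dot q\|^2$ (the choice $\alpha=K/(2L)$ is exactly what makes $\alpha L=K/2$), which consumes half of the available $-K\|\dot q\|^2$ and, after the target $r=K/4$ is set aside, leaves a reserve of $\tfrac{K}{4}\|\dot q\|^2$; similarly $|\nabla\Psi^\top\nabla^2\Psi\,\dot q|\le L\|\nabla\Psi\|\|\dot q\|$ and $|\nabla\Psi^\top\dot q|\le\|\nabla\Psi\|\|\dot q\|$. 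I would then apply Young's inequality to the two resulting $\|\nabla\Psi\|\|\dot q\|$ products, charging each $\|\nabla\Psi\|^2$ part against the gap $\alpha-p=K^2/(4L^2)$ and each $\|\dot q\|^2$ part against the velocity reserve. Assumption~\ref{KL-ineq} (the lower bound on the damping $K$ relative to the smoothness constant $L$) is what guarantees enough dissipation to absorb these cross terms, so that the splitting weights can be chosen to reach $\dot{\mathcal E}\le-p\|\nabla\Psi\|^2-r\|\dot q\|^2$ with the stated $p$ and $r$.

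Finally I would integrate this differential inequality on $(0,t)$ and use $\mathcal E(t)\ge0$ to obtain $\int_0^t\!\big(p\|\nabla\Psi(q(\tau))\|^2+r\|\dot q(\tau)\|^2\big)\,d\tau\le\mathcal E(0)-\mathcal E(t)\le\mathcal E(0)$; then $t\cdot\inf_{\tau\in(0,t)}(\cdots)\le\int_0^t(\cdots)\,d\tau\le\mathcal E(0)$, which is the claimed estimate. The main obstacle is the third step: because the repulsive/boundary potential $\Psi$ is non-convex, $\nabla^2\Psi$ is genuinely indefinite, so the mixed term $\alpha^2\nabla\Psi^\top\nabla^2\Psi\,\dot q$ can only be bounded in magnitude, not in sign. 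Reconciling the crude $L\|\nabla\Psi\|\|\dot q\|$ bounds with the exact constants $p,r,\alpha$ through one Young split, and verifying that the dissipation budget afforded by Assumption~\ref{KL-ineq} keeps every coefficient in range, is the delicate bookkeeping on which the whole argument rests and where I would spend the most care.
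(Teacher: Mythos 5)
Your overall architecture is the same as the paper's: an energy of the form $\tfrac12\|\alpha\nabla\Psi+\dot q\|^2$ plus a multiple of $\Psi$, a pointwise differential inequality whose dissipation dominates $p\|\nabla\Psi\|^2+r\|\dot q\|^2$, integration, and the ``infimum $\leq$ time average'' step (the paper packages the integral as a third Lyapunov term $V_3=\int_0^t p\nabla\Psi^T\nabla\Psi+r\dot q^T\dot q\,d\tau$ and shows $\dot V\leq 0$, which is equivalent to your formulation). Your reading of the two typos in the statement ($r\dot q^T q$ meaning $r\dot q^T\dot q$, and $x$ meaning $q$) also agrees with the paper.

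However, there is a genuine gap at exactly the step you flag as delicate, and it is not where you think it is. Differentiating $V_1=\tfrac12\|\alpha\nabla\Psi+\dot q\|^2$ along $\ddot q=-K\dot q-\nabla\Psi$ produces the velocity--gradient cross term with the \emph{specific} coefficient $-(\alpha K+1)\nabla\Psi^T\dot q$. The paper's proof takes the potential weight to be $\alpha K+1$ (its $V_2=(\alpha K+1)\Psi$), so that this first-order cross term cancels \emph{identically}, and only the two Hessian-weighted terms remain to be handled by $L$-smoothness and completing the square; the $(\alpha+K)$ appearing in the theorem statement is evidently a typo for $\alpha K+1$. You instead keep the weight $(\alpha+K)$ and propose to absorb the resulting residual $-(\alpha K+1-\alpha-K)\nabla\Psi^T\dot q=-(1-\alpha)(1-K)\nabla\Psi^T\dot q$ by a Young split, asserting that Assumption~\ref{KL-ineq} supplies the needed dissipation. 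It does not: after reserving $r=K/4$ and paying for $\alpha\dot q^T\nabla^2\Psi\dot q$, the quadratic budget is $-\tfrac{K}{4}\|\dot q\|^2-\tfrac{K^2}{4L^2}\|\nabla\Psi\|^2$, so the largest cross term any split can absorb has coefficient $2\sqrt{(K/4)(K^2/4L^2)}=K^{3/2}/(2L)$. Assumption~\ref{KL-ineq} is only a \emph{lower} bound $K>\sqrt{1/(4L)}$, so it permits, e.g., $K=0.06$, $L=100$: then $K^{3/2}/(2L)\approx 7\times 10^{-5}$ while $(1-\alpha)(1-K)\approx 0.94$, and the indefinite residual overwhelms the dissipation by four orders of magnitude. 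So no choice of Young weights can close your step 3; the proof only goes through if you replace $(\alpha+K)$ by $\alpha K+1$ in the energy (equivalently, adopt the paper's $V_2$), after which the residual vanishes and only the Hessian terms---which are genuinely of size $O(K^2/L)$ and within budget---need bookkeeping.
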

\begin{proof}
Given the dynamic system (\ref{augmented_hamiltonian}), let Lyapunov-like functions be:
\begin{equation}
    \begin{aligned}
        V_1 &= \frac{1}{2}||\alpha\nabla\Psi(q)+\dot q||_2^2\\
        V_2 &= (\alpha k + 1)\Psi(q)\\
        V_3 &= \int_{0}^{t}p\nabla\Psi(q)^T\nabla\Psi(q) + r\dot q^T\dot q dt.\\
        V &= V_1 + V_2 +V_3.
    \end{aligned}
\end{equation}
Thus $V_1$, $V_2$, and $V_3$ are positive semi-definite. The time derivative of $V_1$ is:
\begin{equation}
    \begin{aligned}
        \dot V_1 =& (\alpha\nabla\Psi(q)+\dot q)^T(\alpha \nabla^2\Psi(q)\dot q - K\dot q-\nabla \Psi(q))\\
        =& -\dot q^T(KI-\alpha\nabla^2\Psi(q))\dot q-\alpha\nabla \Psi^T\nabla\Psi(q)\\
        &+\nabla\Psi(q)^T(\alpha^2\nabla^2\Psi(q)-\alpha KI -I)\dot q.
    \end{aligned}
\end{equation}
The time derivative of $V_2$ and $V_3$ are respectively:
\begin{equation}
    \begin{aligned}
        \dot V_2 =& (\alpha k+1)\nabla \Psi(q)^T\dot q\\
        \dot V_3 =& p\nabla \Psi(q)^T\nabla\Psi(q) + r\dot q^T\dot q.
    \end{aligned}
\end{equation}
Adding the derivatives gives:
\begin{equation}
    \begin{aligned}
        \dot V =& -\dot q^T((K-r)I-\alpha \nabla^2 \Psi(q))\dot q \\
        &- (\alpha - p)\nabla \Psi(q)^T\nabla\Psi(q)+\alpha^2\nabla\Psi(q)^T\nabla^2\Psi(q)\nabla\Psi(q)
    \end{aligned}
\end{equation}
Given $\Psi(q)$ is $L$-smooth, we have $\nabla^2\Psi\leq LI$. Under Assumption \ref{KL-ineq}, by letting $r = \frac{K}{4}$ and $\alpha = \frac{K}{2L}$, we have:
\begin{equation}
    \begin{aligned}
        \dot V \leq& -\frac{K}{4}\dot q^T\dot q- (\alpha - p)\nabla \Psi(q)^T\nabla\Psi(q)\\
        &+\alpha^2\nabla\Psi(q)^T\nabla^2\Psi(q)\nabla\Psi(q)\\
        =&  -\frac{K}{4}(\dot q^T\dot q+\frac{K}{L^2}\nabla\Psi(q)^T\nabla^2\Psi(q)\dot q\\
        &-(\frac{2}{L}-\frac{p}{k})\nabla\Psi(q)^T\nabla\Psi(q)).
    \end{aligned}
\end{equation}
By letting $p = \frac{2KL-K^2}{4L^2}$, we have 
\begin{equation}
    (\frac{2}{L}-\frac{p}{k})I \geq (\frac{K}{2L})^2\nabla^2\Psi(q)^T\nabla^2\Psi(q).
\end{equation}
Thus, 
\begin{equation}
    \dot V\leq -\frac{K}{4}||\dot q+ \frac{K}{2L^2}\nabla^2\Psi(q)\nabla(q)||_2^2\leq 0.
\end{equation}
By the monotonicity of $V$ and the positive semi-definiteness of $V_1$ and $V_2$, We have:
\begin{equation}
    \begin{aligned}
        \int_{0}^{t}p\nabla\Psi(q)^T\nabla\Psi(q) + r\dot q^T\dot q dt \leq V_1(0)+V_2(0),
    \end{aligned}
\end{equation}
which implies (\ref{eq:th2eq}).
\end{proof}

\begin{remark}
The significance of Theorem \ref{convergence-rate} does not limit to UTM applications. It is the first attempt for establishing the convergence rate result for a general class of Hamiltonian systems, as well as for the continuous time algorithm for solving non-convex optimization problem. It is not surprising that $O(\frac{1}{t})$ convergence is achieved, since such speed is well established for the first order methods for solving non-convex optimization problem. 
This result can be applied to the analysis for the behavior of general classes of controlled system, such as flocking control, nonlinear control for dissipative systems, to name a few. It can offer more interpretations for the system behavior under different contexts while specific understanding or structure of the system is present. 
\end{remark}

Note that Theorem \ref{convergence-rate} is not directly useful in our problem because it only provides an upper bound for the velocities and gradient forces, not for the Hamiltonian function. The following proposition provides an insight for convergence rate of Hamiltonian function under a proper assumption:

\begin{proposition}
 Consider system (\ref{augmented_hamiltonian}) with Assumptions \ref{L-smooth}, \ref{KL-ineq}. Suppose there are positive numbers $\underline \alpha$, $\overline \alpha$ for the trajectory $q(t)$ such that
 \begin{equation}
        \underline \alpha \Psi(q) \leq \nabla \Psi(q){}^T \nabla \Psi(q) \leq \overline \alpha  \Psi(q).
        \label{eq:DWassum}
 \end{equation}
 Then there is a positive number $\lambda$ such that
 \begin{equation}
     H(t) \leq \frac{\lambda}{t-t_0} H(t_0).
 \end{equation}
 \label{PR1}
\end{proposition}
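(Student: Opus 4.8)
The plan is to combine the infimum estimate of Theorem \ref{convergence-rate} with the dissipativity of the Hamiltonian, using the two-sided gradient-dominance hypothesis (\ref{eq:DWassum}) as the bridge. Throughout I identify the Hamiltonian of the augmented system (\ref{augmented_hamiltonian}) as $H = \tfrac{1}{2}\dot q^T\dot q + \Psi(q)$, with velocity measured relative to $\hat v$. The first observation I would record is that $H$ is non-increasing along trajectories: differentiating and substituting (\ref{augmented_hamiltonian}) gives $\dot H = \dot q^T(-K\dot q-\nabla\Psi)+\nabla\Psi^T\dot q = -K||\dot q||^2\le 0$, which is exactly the dissipation already established in Theorem \ref{main}(i). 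Hence $H(t)\le H(\tau)$ for every $\tau\le t$. This monotonicity is what will let me upgrade the infimum-over-$(t_0,t)$ estimate of Theorem \ref{convergence-rate} into a genuine pointwise bound at the endpoint $t$.

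Next I would turn the hypothesis (\ref{eq:DWassum}) into a pointwise comparison between $H$ and the integrand appearing in Theorem \ref{convergence-rate}. The lower inequality $\underline\alpha\Psi(q)\le ||\nabla\Psi(q)||^2$ yields $\Psi(q)\le \tfrac{1}{\underline\alpha}||\nabla\Psi(q)||^2$, so that
\begin{equation*}
H = \tfrac{1}{2}||\dot q||^2 + \Psi(q) \le \tfrac{1}{2}||\dot q||^2 + \tfrac{1}{\underline\alpha}||\nabla\Psi(q)||^2 \le C\big(p||\nabla\Psi(q)||^2 + r||\dot q||^2\big),
\end{equation*}
with $C = \max\{\tfrac{1}{2r},\tfrac{1}{\underline\alpha p}\}$ and $p,r>0$ as in Theorem \ref{convergence-rate}. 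Thus at every instant $H$ is dominated by a constant multiple of the quantity that Theorem \ref{convergence-rate} controls.

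I would then assemble the estimate. For any $\epsilon>0$, Theorem \ref{convergence-rate} furnishes some $\tau^*\in(t_0,t)$ at which $p||\nabla\Psi(q(\tau^*))||^2 + r||\dot q(\tau^*)||^2$ is within $\epsilon$ of the infimum, hence bounded by $\tfrac{1}{t-t_0}\big(V_1(t_0)+V_2(t_0)\big)+\epsilon$. Applying the pointwise comparison at $\tau^*$ and then the monotonicity of $H$ gives $H(t)\le H(\tau^*)\le C\big(\tfrac{1}{t-t_0}(V_1(t_0)+V_2(t_0))+\epsilon\big)$, and letting $\epsilon\to 0$ removes the slack. It remains to bound the initial data $V_1(t_0)+V_2(t_0)$ by a multiple of $H(t_0)$: here I would use $||a+b||^2\le 2||a||^2+2||b||^2$ together with the \emph{upper} inequality $||\nabla\Psi||^2\le\overline\alpha\Psi$ and the trivial bounds $||\dot q||^2\le 2H$ and $\Psi\le H$, obtaining $V_1(t_0)+V_2(t_0)\le \lambda' H(t_0)$ for an explicit $\lambda'$ depending on $\alpha,K,\overline\alpha$. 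Setting $\lambda = C\lambda'$ then yields $H(t)\le \tfrac{\lambda}{t-t_0}H(t_0)$, as claimed.

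The main obstacle is conceptual rather than computational: Theorem \ref{convergence-rate} only controls an infimum (equivalently a time average) of the velocity and gradient terms over $(t_0,t)$, whereas the proposition asks for a bound on $H$ at the single time $t$. The device that closes this gap is the monotonicity $\dot H\le 0$, which guarantees the endpoint value is no larger than the near-infimal value realized at some earlier $\tau^*$. The role of the two-sided hypothesis (\ref{eq:DWassum})---a Polyak--\L ojasiewicz/gradient-dominance condition---is then transparent: its lower half converts the gradient bound of Theorem \ref{convergence-rate} into a bound on $H$, while its upper half keeps the initial-data term proportional to $H(t_0)$. The only delicate point in the bookkeeping is ensuring $p>0$, i.e.\ that the coefficient $p=\tfrac{2KL-K^2}{4L^2}$ of Theorem \ref{convergence-rate} is positive, so that the comparison constant $C$ is finite; this should be noted as a standing requirement on $K$ and $L$ alongside Assumption \ref{KL-ineq}.
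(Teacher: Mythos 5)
Your proof is correct and takes essentially the same route as the paper's own (very terse) proof: the paper likewise argues ``purely algebraically'' from the quadratic structure of both sides of (\ref{eq:th2eq}), using the two-sided condition (\ref{eq:DWassum}) to trade $\Psi$ for $\|\nabla\Psi\|^2$ and back, and removes the infimum via the monotonicity $\dot H\leq 0$, exactly as you do with your near-infimal time $\tau^*$. Your closing remark is also well taken: the positivity $p=\tfrac{2KL-K^2}{4L^2}>0$ needs $K<2L$, which Assumption~\ref{KL-ineq} alone does not guarantee, a point the paper silently glosses over.
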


\begin{proof}
 The proof is purely algebraic since both the left-hand-side and right-hand-side of (\ref{eq:th2eq}) involve the quadratic terms of $\nabla \Phi$ and $\dot q$. ``$inf$" is removed since $H$ is monotonically decreasing.
\end{proof}

\begin{remark}
It is clear that when the sUAS system is at a desired configuration, condition (\ref{eq:DWassum}) holds. For $q$ near the desired configuration, condition (\ref{eq:DWassum}) can also be achieved by choosing proper artificial potential functions (e.g., locally quadratic functions near the desired configuration). The existence of $\underline \alpha$ and $\overline \alpha$ allows us to establish a direct convergence bound on the total Hamiltonian of the system. For $\lambda$, it can be either algebraically derived or estimated via test or numerical experiments if a less conservative result is desired. The inequality (\ref{eq:DWassum}) may not be necessary for the convergence rate to hold, as one can see in the simulation results. Such results allow us to offer theoretical completeness for the proceeding results on sUAS entry condition design. 
\end{remark}

\subsection{Entry Condition Design}
In the last subsection, we have designed a control protocol based on APF and we have shown that the state of a fixed group of sUAS converges to an invariant set in which all sUAS maintain the desired separation and track the desired speed for almost every initial configuration. The convergence rate of the Hamiltonian is proved for a fixed group of sUAS. The reason for deriving this rather stronger convergence property is to quantify the speed of regulating the behaviors of the sUAS in a link. For almost every initial condition, we are able to bound the total energy/Hamiltonian in a link at a given time instance. It allows us to derive the time when the system is ready to accept more sUAS into the link and to estimate the total amount of energy the incoming sUAS can bring into the system. By bounding the energy carried by the incoming sUAS for each fixed time interval, we are able to bound the total number of incoming sUAS under assumptions. In this way, our framework can not only account for the micro level regularization of the sUAS behaviors, but also offer theoretical characterization to the macro level traffic property, in particular, the flow rate of each traffic link. Such characterization allows further evaluation of the efficiency of the whole traffic network. 

Let
\begin{equation}
\{t|\mathcal I_\Omega(t)\subsetneq \mathcal I_\Omega(t^-)\}
\end{equation}
be the set of time instances when sUAS enter $\Omega$. This set is called the set of time instances of \textit{regular entry}. Denote the $k$-th entry instance as $t_k$. To limit the entry frequency for sUAS, we consider the following assumption:
\begin{assumption}\label{ASentry_time}
$\forall k\in\mathbb{Z}_{\geq 0},\,t_{k+1}-t_{k}\geq T.$
\end{assumption}
According to Proposition \ref{PR1}, we have
\begin{proposition}
Consider the sUAS in a link with Assumption \ref{entry_time}. Assume the conditions in Proposition \ref{PR1} hold. Suppose $H(t_0) < h_0$ and for each entry instance $t_k$, $H(t_k) - H(t_k^-) \leq h_\epsilon < h_0$. If $T$ given in the Assumption \label{entry_time} satisfies
\begin{equation}
    T \geq \frac{\lambda h_0}{h_0 - h_\epsilon},
\end{equation}
then $H(t) \leq h_0$, $\forall t \geq t_0$. 
\label{PR2}
\end{proposition}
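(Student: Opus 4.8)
The plan is to prove the bound $H(t) \leq h_0$ by induction over the entry instances $t_1, t_2, \dots$, combining two ingredients already available: the monotone non-increase of $H$ between entries (Theorem \ref{main}(i), $\dot H \leq 0$), and the $O(1/t)$ decay rate of Proposition \ref{PR1} applied separately on each inter-entry interval. The central observation is that the hypothesis $T \geq \frac{\lambda h_0}{h_0 - h_\epsilon}$ is exactly what forces $\frac{\lambda}{T} \leq \frac{h_0 - h_\epsilon}{h_0} < 1$, so that over one full inter-entry interval the Hamiltonian is guaranteed to decay from at most $h_0$ down to at most $h_0 - h_\epsilon$, leaving precisely enough room to absorb the worst-case energy jump $h_\epsilon$ injected by the next batch of entering sUAS without ever exceeding $h_0$.

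Concretely, I would set up the induction with hypothesis $H(t_k) \leq h_0$ (the value immediately after the $k$-th entry; the base case is $H(t_0) < h_0$ by assumption, with no jump at $t_0$). For the inductive step, on the interval $[t_k, t_{k+1})$ no sUAS enters, so the group is fixed and Theorem \ref{main}(i) gives $H(t) \leq H(t_k) \leq h_0$ throughout the open interval. Evaluating the decay bound of Proposition \ref{PR1} with the time origin reset to $t_k$, at the instant just before the next entry, yields
\begin{equation*}
H(t_{k+1}^-) \leq \frac{\lambda}{t_{k+1}-t_k}\,H(t_k) \leq \frac{\lambda}{T}\,h_0 \leq \frac{h_0 - h_\epsilon}{h_0}\,h_0 = h_0 - h_\epsilon,
\end{equation*}
where the second inequality uses Assumption \ref{ASentry_time} ($t_{k+1}-t_k \geq T$) and the third uses the lower bound on $T$. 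The jump condition $H(t_{k+1}) - H(t_{k+1}^-) \leq h_\epsilon$ then gives $H(t_{k+1}) \leq (h_0 - h_\epsilon) + h_\epsilon = h_0$, closing the induction. Since $H(t) \leq H(t_k) \leq h_0$ also holds on each open interval, the bound $H(t) \leq h_0$ follows for all $t \geq t_0$.

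The step I expect to require the most care is the legitimacy of re-applying Proposition \ref{PR1} on each interval with its time origin shifted to $t_k$ rather than $t_0$. This demands that the hypotheses of Proposition \ref{PR1} — in particular the two-sided bound $\underline\alpha \Psi(q) \leq \nabla\Psi(q)^T\nabla\Psi(q) \leq \overline\alpha\Psi(q)$ along the trajectory — hold on every inter-entry segment, and that a single constant $\lambda$ may be used uniformly across all $k$; otherwise $T$ would have to dominate a supremum of interval-dependent constants. A second, more technical subtlety is that the factor $\frac{\lambda}{t-t_k}$ diverges as $t \to t_k^+$ and so is vacuous near the start of each interval; this is why the argument must use monotonicity (Theorem \ref{main}(i)) to control $H$ at the beginning of the interval and reserve the decay estimate for the endpoint $t_{k+1}^-$, where $t_{k+1}-t_k \geq T > \lambda$ guarantees the bound is a genuine contraction. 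Once these two points are secured, the remainder is the purely arithmetic chain displayed above.
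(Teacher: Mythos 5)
Your proposal is correct and is essentially the argument the paper intends: the paper states Proposition \ref{PR2} without an explicit proof (presenting it as an immediate consequence of Proposition \ref{PR1}), and your per-interval computation --- decay $H(t_{k+1}^-)\leq \frac{\lambda}{T}H(t_k)\leq h_0-h_\epsilon$ over each inter-entry interval, followed by absorption of the jump $h_\epsilon$ --- is exactly the calculation the paper itself carries out in the proof of Proposition \ref{entry_lemma}. The two caveats you flag (reapplying Proposition \ref{PR1} with shifted time origin and a uniform $\lambda$ for the enlarged post-entry group, and falling back on $\dot H\leq 0$ from Theorem \ref{main} near the start of each interval where the $\lambda/(t-t_k)$ bound is vacuous) are precisely the right points to secure and are likewise implicit in the paper's treatment.
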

Such a result is desirable in the sense that if the entry rate is bounded, the Hamiltonian function value is bounded such that collision and boundary violations are excluded. Note that our definition of entry event allows multiple sUAS enter the link simultaneously and Proposition \ref{PR2} still applies. However, the result will be conservative considering the cases where there may be multiple sUAS enter the link intermittently in a short time period with relatively regulated configuration. To address such cases, we consider the following definition for the intermittent entry event of multiple sUAS in a short time period.

$t_k$ is said to be a time instance for a \textit{multiple entry} event if there are multiple sUAS enter the link at $t_k$ and before $t_k+t_\epsilon$, where $t_\epsilon$ is a design parameter that should be small. Denote $\mathcal I_{\partial\Omega}(t_k)$ as the index set for the sUAS enter the link during $[t_k, t_k + t_\epsilon)$. Let $t_{k+1}-(t_k+t_\epsilon)\geq T$ and $T>>t\epsilon$. Since $t_\epsilon$ is small, it is of interest to establish entry conditions such that the sUAS traffic behavior satisfies the constraints during $[t_k+\epsilon, t_{k+1})$. It is done by the following proposition:

\begin{proposition}
\label{entry_lemma}
 Consider sUAS with dynamics (\ref{dynamics}) in the link $\Omega$ with conditions in Proposition \ref{PR1} valid. Let $t_k$, $k \in \mathbb{Z}_{\geq 0}$ be the time instances of entry events (either regular entry or multiple entry) with Assumption \ref{ASentry_time} valid. For all $k \in \mathbb{Z}_{\geq 0}$, assume $\forall i\in \mathcal I_{\partial\Omega}(t_{k+1}),\;j\in\mathcal I_\Omega(t_{k+1}^-),\;\psi(||q_{ij}(t_{k+1}+t_\epsilon)||_\sigma) = 0$. Let
 \begin{equation}\label{eq:kappa_def}
 \begin{aligned}
 \kappa &= \max_{i\in \mathcal I_{\partial\Omega}(t_{k+1}+t_\epsilon)} \frac{1}{2}||\delta v_i(t_{k+1}+t_\epsilon)||_2^2+\sum_{n=1}^m \psi_b(d_{in}) \\
 \gamma &= \max_{i \neq j\in \mathcal I_{\partial\Omega}(t_{k+1}+t_\epsilon)} \frac{1}{2}\psi(||q_{ij}(t_{k+1}+t_\epsilon))||_\sigma).
 \end{aligned}
 \end{equation}
 If
 \begin{equation}\label{eq:numberieq}
     M\kappa +M(M-1)\gamma\leq c^*(1-\frac{\lambda}{T}),
 \end{equation}
where $M$ is a upper bound for the number of sUAS for all entry events, and then $\forall t \in \cup_{k} [t_{k}+t_\epsilon,t_{k+1})$, $C_1$, $C_2$, $C_3$ are satisfied.
\end{proposition}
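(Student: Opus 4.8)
The plan is to reduce the claim to a uniform bound $H(t) \le c^*$ on every \emph{quiet} interval $[t_k + t_\epsilon, t_{k+1})$, since statements (iii)--(v) of Theorem \ref{main} guarantee $C_1$, $C_2$, $C_3$ precisely when $H \le c^* = \min\{c_1^*,c_2^*,c_3^*\}$. It therefore suffices to track the Hamiltonian at the end of each transient window, $a_k \triangleq H(t_k + t_\epsilon)$, show it never exceeds $c^*$, and propagate the bound across the interval using $\dot H \le 0$ (Theorem \ref{main}(i)). The argument refines Proposition \ref{PR2}: rather than a single abstract jump $h_\epsilon$, I quantify the energy injected by a multiple-entry event through the quantities $\kappa$ and $\gamma$ of (\ref{eq:kappa_def}).

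First I would bound the energy injected at an entry event. Evaluating $H$ at $t_{k+1}+t_\epsilon$, the standing hypothesis $\psi(\|q_{ij}(t_{k+1}+t_\epsilon)\|_\sigma)=0$ for every incoming/resident pair $i\in\mathcal I_{\partial\Omega}(t_{k+1})$, $j\in\mathcal I_\Omega(t_{k+1}^-)$ makes $V_p$ additive: it splits into a resident--resident part, a vanishing resident--incoming cross term, and an incoming--incoming part. Because $V_k$ and $V_b$ are already sums over agents, the entire contribution of the arriving sUAS to $H$ is
\begin{equation*}
\sum_{i \in \mathcal I_{\partial\Omega}(t_{k+1})} \Big( \tfrac{1}{2}\|\delta v_i\|_2^2 + \sum_{n=1}^m \psi_b(d_{in}) \Big) + \tfrac{1}{2}\sum_{i \neq j \in \mathcal I_{\partial\Omega}(t_{k+1})} \psi(\|q_{ij}\|_\sigma),
\end{equation*}
all evaluated at $t_{k+1}+t_\epsilon$. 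With at most $M$ incoming sUAS, the first sum is bounded by $M\kappa$ and the second by $M(M-1)\gamma$, so the injected energy is at most $h_\epsilon \triangleq M\kappa + M(M-1)\gamma$.

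Next I would set up the recursion for $a_k$. On the quiet interval $[t_k+t_\epsilon,t_{k+1})$ the group is fixed, so Proposition \ref{PR1} applies with time origin $t_k+t_\epsilon$; since Assumption \ref{ASentry_time} gives $t_{k+1}-(t_k+t_\epsilon)\ge T$, this yields $H(t_{k+1}^-)\le \frac{\lambda}{T}\,a_k$. Adding the injected energy gives the scalar recursion $a_{k+1}\le \frac{\lambda}{T}a_k + h_\epsilon$. Condition (\ref{eq:numberieq}) is exactly $h_\epsilon \le c^*\big(1-\frac{\lambda}{T}\big)$ (which forces $\lambda/T\le 1$), i.e. the fixed-point inequality that closes the induction: if $a_k\le c^*$ then $a_{k+1}\le \frac{\lambda}{T}c^* + h_\epsilon \le c^*$. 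Taking the resident configuration prior to the first entry in the safe sublevel set $\{H\le c^*\}$ as the induction base (consistent with $H(t_0)<h_0$ in Proposition \ref{PR2}), one obtains $a_k\le c^*$ for all $k$; propagating by $\dot H\le 0$ then gives $H(t)\le c^*$ on each $[t_k+t_\epsilon,t_{k+1})$, and Theorem \ref{main} finishes the proof.

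The main obstacle is the transient window $[t_{k+1},t_{k+1}+t_\epsilon)$: the recursion requires $H(t_{k+1}^-)$ to still bound the resident part of $H(t_{k+1}+t_\epsilon)$, i.e. the resident sUAS' energy must not be inflated by the arrivals during the transient. This holds once the incoming and resident subsystems are decoupled throughout the window --- equivalently $\phi(\|q_{ij}\|_\sigma)=0$ on all cross pairs over $[t_{k+1},t_{k+1}+t_\epsilon)$, so no cross forces act and the resident subsystem stays dissipative. The stated hypothesis supplies the zero cross collision-potential only at the endpoint $t_{k+1}+t_\epsilon$; the delicate step is justifying (via the entry geometry, or by strengthening the hypothesis to hold throughout the short window $t_\epsilon$) that the resident energy is non-increasing across the transient, since the additive bound and hence the whole recursion rest on it.
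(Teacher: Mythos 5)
Your proposal follows essentially the same route as the paper's proof: decompose $H(t_{k+1}+t_\epsilon)$ into resident and incoming parts using the zero cross-potential hypothesis, bound the resident part by $\frac{\lambda}{T}c^*$ via Proposition \ref{PR1} and Assumption \ref{ASentry_time}, bound the injected energy by $M\kappa + M(M-1)\gamma$, close the induction with (\ref{eq:numberieq}), and invoke Theorem \ref{main} to get $C_1$--$C_3$. Your version is in fact slightly more careful than the paper's: the paper leaves the induction on $H(t_k)\leq c^*$ implicit, whereas you state the recursion $a_{k+1}\leq \frac{\lambda}{T}a_k + h_\epsilon$ and the fixed-point inequality explicitly. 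The ``delicate step'' you flag at the end is a genuine issue, and it is \emph{not} resolved in the paper either: the paper's chain of inequalities silently assumes that the resident contribution to $H(t_{k+1}+t_\epsilon)$ is still bounded by $H(t_{k+1}^-)$, which requires that no repulsive cross-forces inflate resident energy during the transient window $[t_{k+1},t_{k+1}+t_\epsilon)$; the stated hypothesis only kills the cross potentials at the single instant $t_{k+1}+t_\epsilon$. So your identification of that gap is a correct critique of the paper's own argument, and your suggested fix (strengthening the hypothesis to hold throughout the window, so the two subsystems are decoupled and the resident subsystem remains dissipative) is the natural repair.
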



\begin{proof}
 By the given conditions, we have:
\begin{equation}
\begin{aligned}
H(t_{k+1}+t_\epsilon) =& \sum_{i\in \mathcal I_{\Omega}(t_{k+1}^-)}\frac{1}{2}\delta v_i^T\delta v_i +\sum_{n=1}^{m}\psi_b(d_{in})\\
&+\frac{1}{2}\sum_{j\in \mathcal I_{\Omega}(t_{k+1}^-),j\neq i}\psi(||q_{ij}||_\sigma)\\
&+\sum_{i\in \mathcal I_{\partial\Omega}(t_{k+1}+t_\epsilon)}\frac{1}{2}\delta v_i^T\delta v_i + \sum_{n=1}^{m}\psi_b(d_{in})\\
&+\sum_{{i\in \mathcal I_{\Omega}^\partial(t_{k+1}+t_\epsilon)},j\neq i}\psi(||q_{ij}||_\sigma)
\end{aligned}
\end{equation}
for any $k \in \mathbb{Z}_{\geq 0}$. Given that $H(t_k)\leq c^*$, and by Proposition \ref{PR1}, we have:
\begin{equation}
\begin{aligned}
H(t_{k+1}^-) \leq& \frac{c^*\lambda}{T}.
\end{aligned}
\end{equation}
Thus,
\begin{equation}
\begin{aligned}
H(t_{k+1}+t_\epsilon)\leq& \sum_{i\in \mathcal I_{\partial\Omega}(t_{k+1}+t_\epsilon)}\frac{1}{2}\delta v_i^T\delta v_i +  \sum_{n=1}^{m}\psi_b(d_{in}) \\
+&\frac{1}{2}\sum_{{j\in \mathcal I_{\partial\Omega}(t_{k+1}+t_\epsilon)},j\neq i}\psi(||q_{ij}||_\sigma)\\
&+\frac{c^*\lambda}{T}\\
\leq & M\kappa+M(M-1)\gamma + \frac{c^*\lambda}{T}.
\end{aligned}
\end{equation}
Given $M\kappa +M(M-1)\gamma\leq c^*(1-\frac{\lambda}{T})$, we conclude that $H(t_{k+1}+t_\epsilon)\leq c^*$. By Theorem \ref{main}, $C_1$, $C_2$, and $C_3$ are satisfied $\forall t \in \cup_{k} [t_{k}+t_\epsilon,t_{k+1})$.
\end{proof}
\begin{remark}
The above proposition assumes that the distance between any sUAS entering the link at $t_{k+1}$ and any sUAS already in the link before $t_{k+1}$ is greater than or equal to the desired separation. Such assumption can be realized by the design of $T$ for any given $\hat v$. It is also assumed that the boundedness of the Hamiltonian is satisfied all the time before the entry of the group of sUAS. This assumption shall not be violated in practice to ensure safety.
\end{remark}
Proposition \ref{entry_lemma} serves a practical tool for entry condition regularization and link transition design. It is intuitive that two very ``different" links should not be connected. The following discussion quantifies the maximum difference between two links in order to be connected. Let the upstream link and downstream link be $L_1 = (\Omega_1,\hat v_1, \hat d_1, d_{1\min},\hat d_{1b}, d_{1b,\min})$ and $L_2 = (\Omega_2,\hat v_2, \hat d_2, d_{2\min},\hat d_{2b}, d_{2b,\min})$, respectively. Let $\psi^1,\,\psi_b^1$ and $\psi^2,\psi_b^2$ be the potential energy functions used for collision avoidance and boundary separation for $L_1$ and $L_2$, respectively. Assume that the $L_1$ is long enough that the exiting sUAS are at desired configuration. We also define the event of link transition. sUAS $i$ is said to transit from $L_1$ to $L_2$ if:
\begin{equation}
q_i\in\mathcal A_1\cap \mathcal A_2.
\end{equation}
This admits that $\mathcal A_1\cap \mathcal A_2\neq \emptyset$, and $\psi_b^1(d_{in}) = \psi_b^2(d_{in}) = 0$ for $n = 1,\dots,m$. After a transition, sUAS will start using the control protocol defined on $L_2$. Then $\kappa$ and $\gamma$ in (\ref{eq:kappa_def}) can be quantified based on the definitions of $L_1$ and $L_2$, that is:
\begin{equation}
    \begin{aligned}
        \kappa &= \frac{1}{2}(\hat{v}_1-\hat{v}_2)^T(\hat{v}_1-\hat{v}_2)\\
        \gamma &= \frac{1}{2}\psi^2(\hat d_1)
    \end{aligned}
\end{equation}
It can be observed from (\ref{eq:numberieq}) that if $\kappa\geq (1-\lambda/T)c^*$, then no more than one sUAS can enter $L_2$ from $L_1$ to ensure constrains satisfaction. This phenomenon explains that a high speed link should not be connected to a low speed link immediately, but another transitional link will be necessary. The idea of the transitional link resembles the on/off ramp of the highway in ground traffic. The maximum number of sUAS that can enter $L_2$ from $L_1$ in every $[t_k,t_k+t_\epsilon)$ is then the maximum integer solution to inequality (\ref{eq:numberieq}). This identity connects our analysis on the entry condition to the marco flow control between two connected links.

\begin{figure*}[t!]
	\centering
	\label{T_sUAS}
    \includegraphics[width = \linewidth]{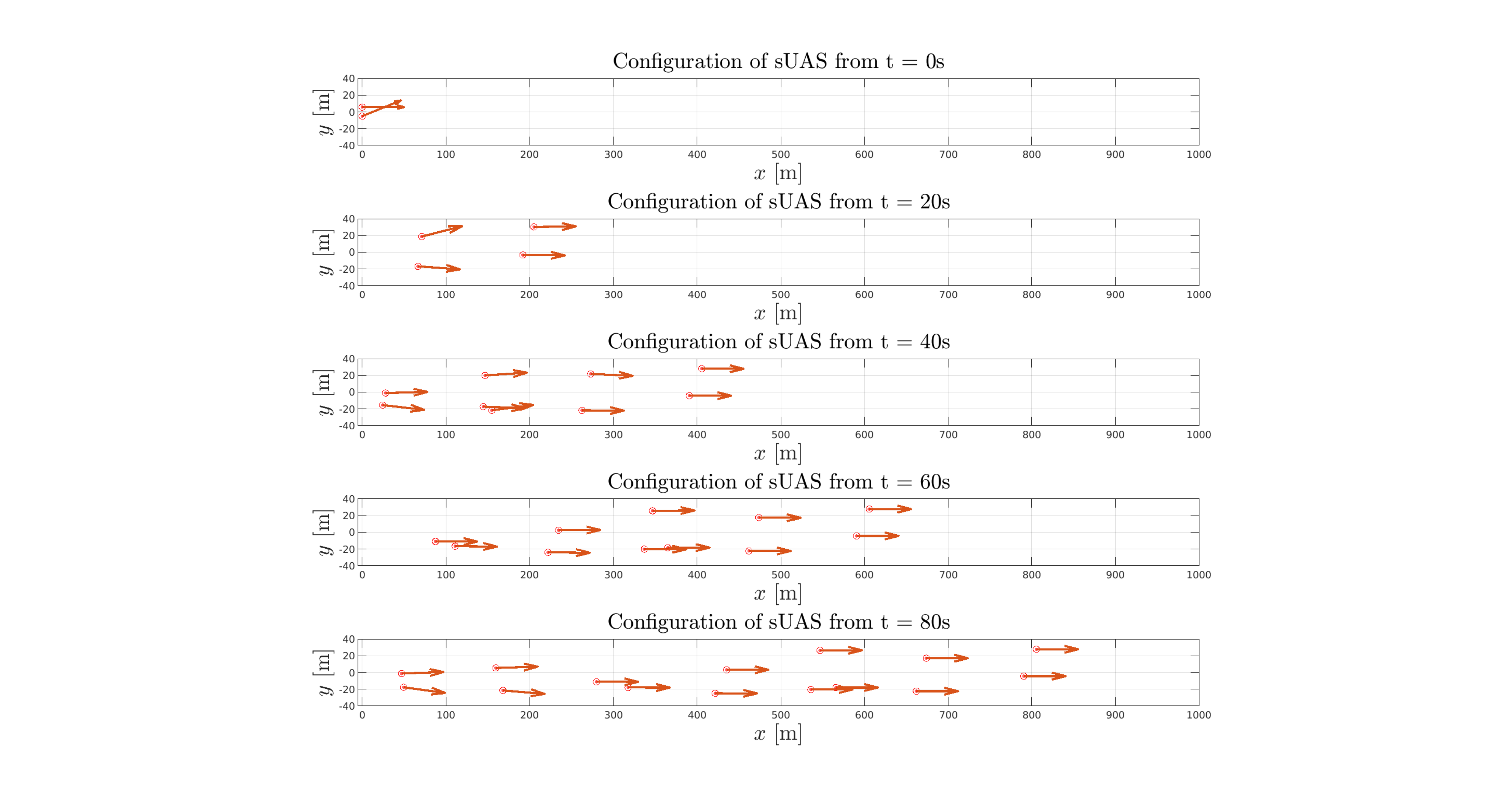}
	\caption{Configurations of sUAS in every 20s after the entry of a new group of sUAS}
	\label{T_sUAS}
\end{figure*}

\subsection{Hardware/Software Requirement}
In this section, we briefly discuss the hardware and software requirements of the implementation of our framework.  Based on the control protocol we designed, we now propose hardware and software requirements on the sUAS. First, under our current framework, a receiver is required on sUAS to receive the broadcast control protocol in each traffic link. Certain encoding and decoding protocols need to be designed in this process to prevent any potential cyberattack. The control protocol (\ref{control_law}) first assumes that position and velocity are estimated. This can be achieved by the integration of inertia measurement sensors, such as gyroscope, accelerometer, along with a Global Positioning System, which are typical equipped on an sUAS. Sensor fusion and state estimation techniques can be applied to achieve accurate state information for each sUAS. The subject of state estimation for highly non-linear sUAS dynamics is non trivial, and we refer to~\cite{rigatos2012nonlinear,bibuli2009path} for more details. Second, for the inter-sUAS separation, the relative distance measurement is required. Unlike classic formation control or flocking control problems where a cooperative information exchange protocol is assumed, it may not be practical to assume that all sUAS traveling in a traffic network can send and receive state information directly from a communication network. Furthermore, the relative position is required for our control protocol to be implemented. The existence of other sUAS and obstacles are not differentiated, and obstacle sensing techniques can be used to achieve collision avoidance and desired separation. Detection sensors like radar, LIDAR, or sonar can be used to measure relative distances of approaching objects. Combinations of sensors and data fusion techniques can be used to get more accurate state estimates. A comprehensive survey on sUAS sensing technologies can be found in~\cite{yu2015sense}.

\section{numerical experiments}\label{simulation-section}
In this section, we demonstrate the proposed control protocol for UTM with an illustrative numerical simulation. In the simulation, the potential function we chose is:
\begin{equation}
\psi(x) = \begin{cases}
\log(\cosh(x-\hat d))\; &\text{if } x<\hat d\\ 
0\;&\text{otherwise}
\end{cases}
\end{equation}
the gradient of $\psi$ is given by:
\begin{equation}
\phi(x) = \begin{cases}
\tanh(x-\hat d)\; &\text{if } x<\hat d\\
0\;&\text{otherwise}
\end{cases}
\end{equation}
We let $\psi_b = \psi$ and $\phi_b = \phi$. Let $\Omega$ be the polyhedron whose walls are defined by:
\begin{equation}
\begin{aligned}
A = \begin{bmatrix}
0 & 1 & 0\\
0&-1&0\\
0&0&1\\
0&0&-1
\end{bmatrix}, 
b = \begin{bmatrix}
40\\-40\\40\\-40
\end{bmatrix}.
\end{aligned}
\end{equation}
Other parameters for numerical simulation are given as:
\begin{table}
\centering
\captionof{table}{Parameters for simulation} \label{parameters} 
\begin{tabular}[h!]{| c | c |}
\hline
 $\hat v$ & $[10,\,0,\,0]^T$ [m/s]\\
 $\overline v$ & 25 [m/s]\\
 $\underline v$ & 5 [m/s]\\
 $d_{\min}$ & 1.5 [m]\\
 $\hat d$ & 10 [m]\\
 $d_{b,\min}$ & 0 [m]\\
 $\hat d_{b}$ & 20 [m]\\
 $K_i$ & 0.1\\
 $\varepsilon$ & 0.9\\
 $T$ & 20 [s]\\
 \hline
\end{tabular}
\end{table}
Based on the parameters, we can achieve the feasible $c^* = 8.3069$. The configurations of sUAS for every $20$ seconds after a new group of sUAS are presented in Figure \ref{T_sUAS}.

In this experiment, the set $\mathcal A \triangleq \{(x,y,z)|-20\leq y \leq 20,\,-20 \leq z \leq 20,\,0\leq x\leq 1000\}$, and it be observed that the configuration converges. We also plot the minimum  pairwise separation between sUAS in the traffic link in Figure \ref{figsep}. It can can be observed that all the sUAS satisfy the minimum separation rule at all time.
\begin{figure}[h!]
\centering
\includegraphics[width = 0.8\linewidth]{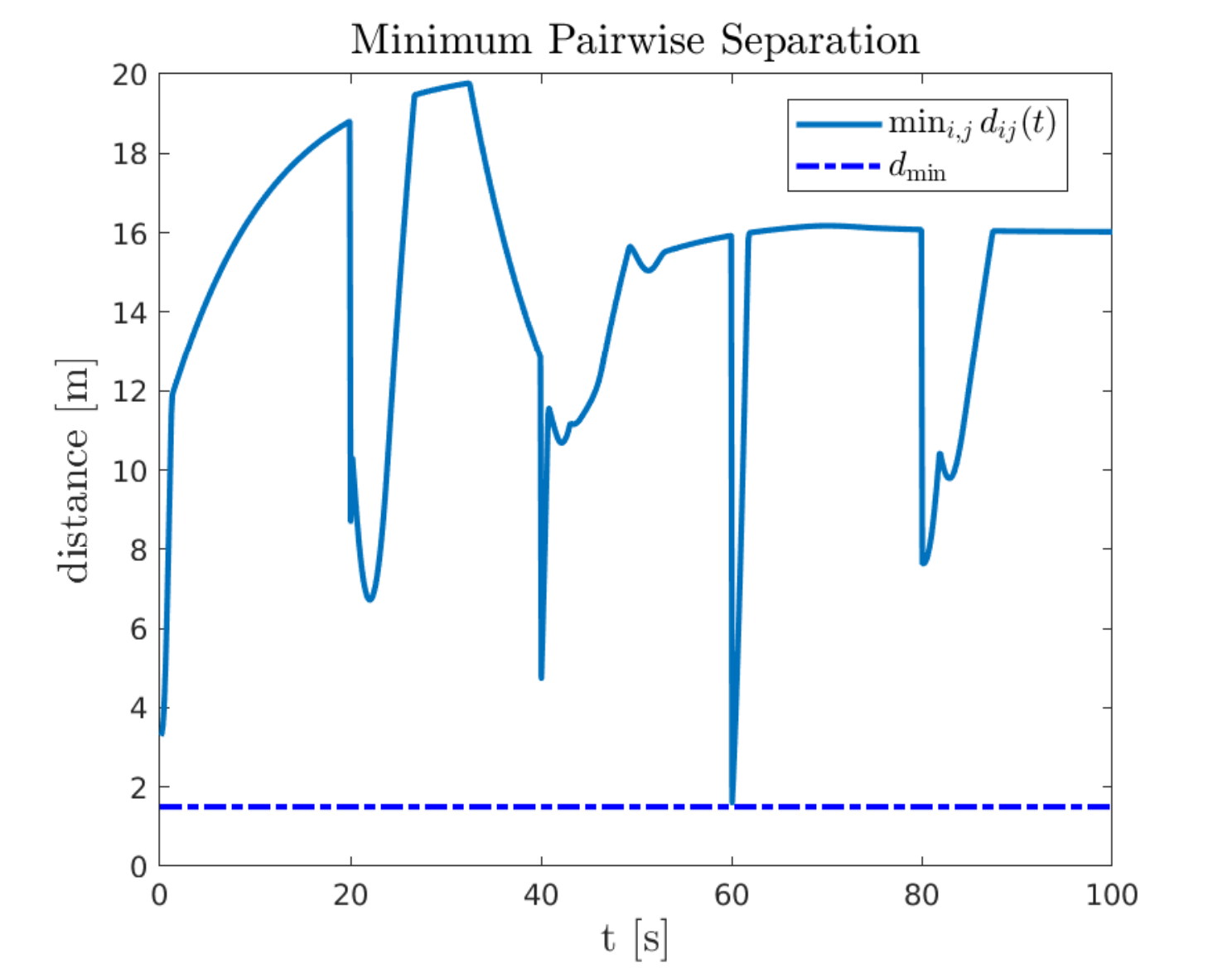}
\caption{Minimum Pairwise Separation}
\label{figsep}
\end{figure}
Similarly, it can be seen in Figure \ref{vel_bound} that the minimum and maximum velocity constraints are always satisfied. 
\begin{figure}
\centering
\includegraphics[width = 0.8\linewidth]{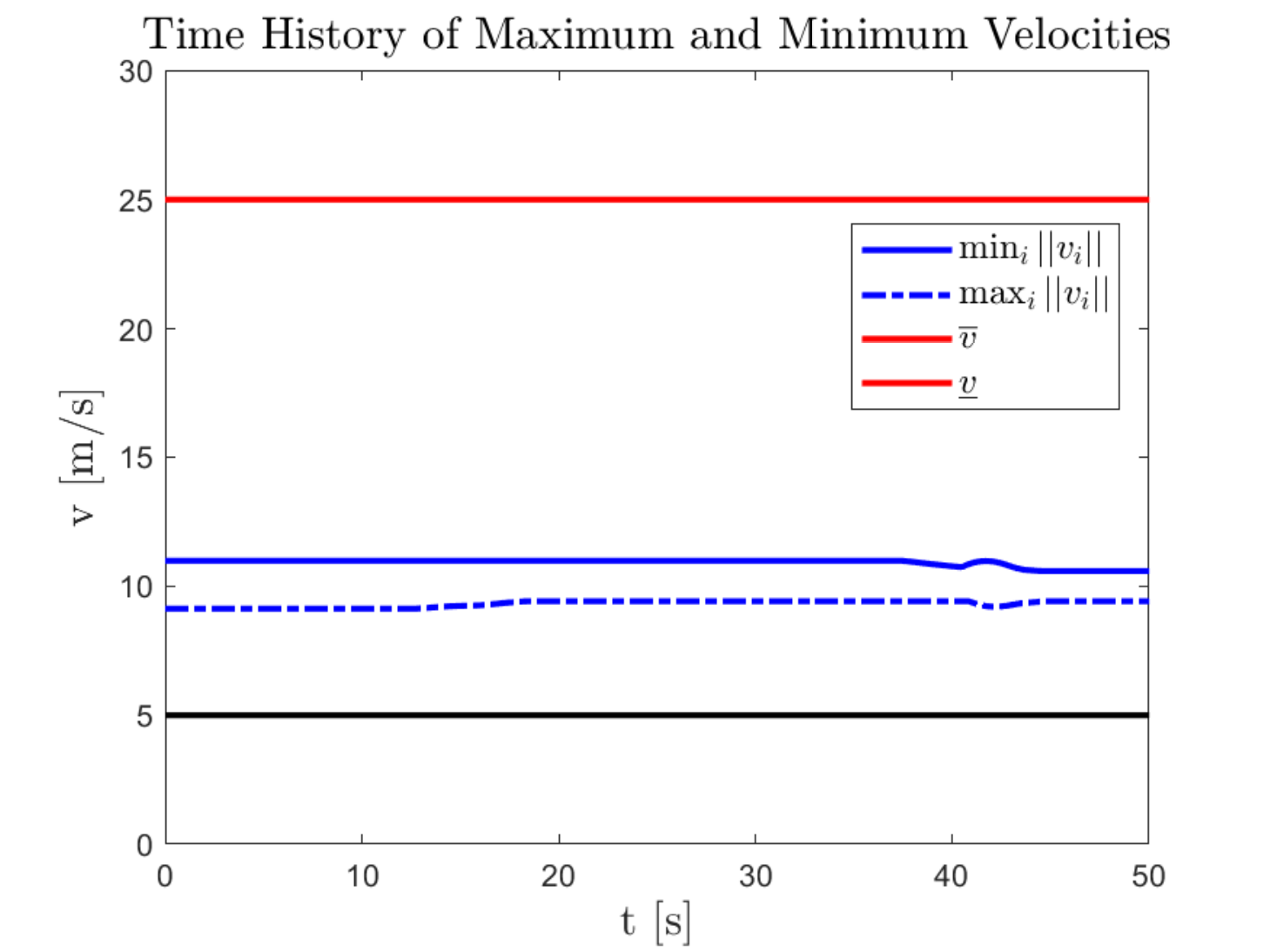}
\caption{Time history of Minimum Velocity and Maximum Velocity}
\label{vel_bound}
\end{figure}
This simulation shows that with our designed control protocol and entry condition, sUAS are able to enter a single link safely and regulate their speeds and separations. The fundamental benefit of our control protocol based sUAS traffic management is not limited to achieve a collective behavior, but increase the amount of traffic a single link can take under a short period of time. Under our simulation, at least 2 sUAS can enter the link at every 20 seconds, which allows 360 sUAS to travel across this link per hour. Such a great amount of traffic is formidable in the classical trajectory file/review process for traffic management if all the operators file their trajectory individually. The trajectory reviewing for this amount of traffic can also be overburden for UTM.

\section{Conclusion}~\label{conclusion}
In this article, we have proposed a new control protocol design and analysis method which can safely manage a large amount of sUAS, thereby improving scalability of UTM. By taking the benefits of the autonomous nature of the sUAS, the traffic management problem is reformulated as a distributed coordination control for multi-agent systems. We formally defined the sUAS behavior regularization problem in a single traffic link and proposed a control protocol to achieve control objectives without violating operating constraints. Further, we have analyzed the proposed control protocol and developed the condition for sUAS entering a traffic link. This entry condition could be successfully converted to traffic management criteria/rules. In the numerical experiments, the proposed control protocol has been shown to be effective, and the entry condition is validated. This work offers a fundamental framework and theoretical results for studying a micro-scope traffic regularization problem in more complex traffic network elements, such as merge links and split links.


\bibliographystyle{unsrt}
\bibliography{reference}
\end{document}